\newtheorem{thm}{Theorem}
\newtheorem{pro}{Proposition}
\newtheorem{lem}{Lemma}
\newtheorem{conj}{Conjecture}
\newcommand{\tr}{\mathrm{tr}}
\newcommand{\E}{\mathcal{E}}
\newcommand{\I}{\mathcal{I}}
\newcommand{\y}{\rangle}
\newcommand{\yy}{\rangle \! \rangle}
\newcommand{\zz}{\langle \! \langle}
\newcommand{\1}{\boldsymbol{1}}
\newcommand{\TODO}[1]{\textcolor{red}{\@ifnotempty{#1}{ #1}}}
\begin{document}

%\preprint{APS/123-QED}

\title{Relating measurement disturbance, information and orthogonality}
% Force line breaks with \\

\author{Yizhou Liu}
 \email{liuyz18@mails.tsinghua.edu.cn}
\affiliation{
 Department of Engineering Mechanics, 
 Tsinghua University,
 Beijing 100084, China
}

\author{John B. DeBrota}
\affiliation{
Department of Physics and Astronomy, Tufts University, 574 Boston Avenue, Medford MA 02155, USA
}
\date{\today}% It is always \today, today,
             %  but any date may be explicitly specified

\begin{abstract}
In the general theory of quantum measurement, one associates a positive semidefinite operator on a $d$-dimensional Hilbert space to each of the $n$ possible outcomes of an arbitrary measurement. In the special case of a projective measurement, these operators are pairwise Hilbert--Schmidt orthogonal, but when $n>d$, orthogonality is restricted by positivity. This restriction allows us to more precisely state the quantum adage: information gain of a system is always accompanied by unavoidable disturbance. Specifically, we investigate three properties of a measurement with L\"uders rule updating: its disturbance, a measure of how the expected post-measurement state deviates from the input; its measurement strength, a measure of the intrinsic information producing capacity of the measurement; and its orthogonality, a measure of the degree to which the measurement operators differ from an orthonormal set. These quantities satisfy an information-disturbance trade-off relation that highlights the additional role played by orthogonality. Finally, we assess several classes of measurements on these grounds and identify symmetric informationally complete quantum measurements as the unique quantum analogs of a perfectly informative and nondisturbing classical ideal measurement.
\end{abstract}
\pacs{03.67.-a, 03.65.Ta}

%\keywords{Suggested keywords}%Use showkeys class option if keyword
                              %display desired
\maketitle

\section{\label{sec:intro}introduction}

Well-established quantum lore states that information gain about a system from measurement comes at the cost of an inescapable finite disturbance~\cite{FP96,F98,Ban01,FJ01,Bar02,Mac06,BS06,Luo10,SKU16,ZZY16,BGD+20}. Generally, the more informative the measurement, the larger the disturbance. This is contrasted with an imagined classical picture where the complete physical condition of a system is captured by a set of coordinates which may be passively observed; classically, one may obtain complete information with zero disturbance. What can this difference tell us about the nature of reality? What can it tell us about the informational power of quantum theory?

In this paper, we present a quantification of the information-disturbance trade-off which highlights the role played by the orthogonality, or lack thereof, of the operators associated with a quantum measurement.

%From the perspective of quantum cryptography, the disturbance caused by information gain is of practical importance~\cite{EH94,Lu99,GR02}. 

\begin{comment}
It has been well acknowledged that a quantum measurement
would alter the system being observed as the cost of 
obtaining information.
Nowadays, the disturbance caused by the information gain
is of practical importance, which is particularly
concerned by
quantum cryptography~\cite{EH94,Lu99,GR02}.
Several inequalities has been derived to 
describe this information–disturbance relation~\cite{Ban01,FJ01,Mac06}.
However, it is still not so clear what physics is 
preventing the disturbance from disappearing 
for measurements which can well extract information.
As indicated by the title of this paper, we 
take orthogonality as the reason. 
In some cases, 
orthogonality, or the extent to which 
quantum measurement operators are pairwise
orthogonal is restricted, leading to
non-classical results.
\end{comment}

First, we introduce some notations and definitions.
Let $\mathcal{H}_d$ denote a $d$-dimensional Hilbert space, and let $L(\mathcal{H}_d)$ denote the $d^2$-dimensional real vectorspace of Hermitian operators
on $\mathcal{H}_d$, equipped with the Hilbert--Schmidt inner product $\tr (A^{\dagger}B)$. To a system, one assigns a quantum state, also called a density operator, which is a positive semidefinite unit trace operator in $L(\mathcal{H}_d)$. To a measurement $\E$, one assigns a positive operator valued measure (POVM) $\{E_i\}$, which is a set of $n$ positive semidefinite operators in $L(\mathcal{H}_d)$ which sum to the identity. Each operator $E_i$, called an effect, is associated with a possible outcome of the measurement. 

Effects may be written as rescalings of unit trace operators,
\begin{equation}
E_i = e_i \Pi_i,~~e_i = \tr E_i\;.
\end{equation}
If all the weights $e_i$ are equal, the POVM is called unbiased, and if all $\Pi_i$s are rank-1, we say the POVM is rank-1. 

Given a quantum state $\rho$ and a POVM $\{E_i\}$, the probability for the $i$th outcome is given by the Born rule:
\begin{equation}\label{born}
p_i = \tr E_i\rho\;.
\end{equation}

While a POVM allows one to assign probabilities to possible outcomes, the post-measurement state depends on the specific interaction 
mode~\cite{FJ01}. The L\"uders 
rule \cite{Bar02,BL09} provides a distinguished choice.
For an input state $\rho$ and measurement outcome $i$, the L\"uders rule post-measurement state is
\begin{equation}\label{luders}
\rho_i=\frac{1}{p_i}\sqrt{E_i}\rho\sqrt{E_i}\;,
\end{equation}
where $\sqrt{E_i}$ denotes the unique positive semidefinite square root of $E_i$. 

Before performing a measurement, one may consider the \emph{expected} post-measurement state of a measurement $\E$, that is, the convex mixture of the possible post-measurement states weighted by the Born rule probabilities of each outcome. For L\"uders rule updating, the expected post-measurement state is given by the channel
\begin{equation}
\E(\rho) = \sum_i p_i \rho_i=\sum_i \sqrt{E_i}\rho\sqrt{E_i}\;,
\label{post}
\end{equation}
where $\E$ alternately refers to the measurement or the channel when it is clear from context. 

The L\"uders channel is unique among possible interaction modes which do not break quantum coherence more than is necessary~\cite{Bar98,Ban01,FJ01}. As we wish to compare the disturbing tendencies of different measurements, in the following analysis we only consider the
L\"uders channel interaction mode, and we use it to derive a characteristic
measure for a POVM's intrinsic
disturbance. We refer to the set $\{p_i,\rho_i\}$ given by Eq.~\eqref{born} and Eq.~\eqref{luders} as the characteristic post-measurement ensemble for a measurement $\E$.

The most familiar quantum measurements are projective measurements, also called von Neumann measurements. The effects of a projective measurement are the pairwise orthogonal projectors onto the eigenspaces of a Hermitian ``observable''. A projective measurement may have as many as $n=d$ outcomes if the eigenvalues of the observable are distinct. In this case, the POVM will be rank-1 and unbiased. L\"uders rule is implicit in a nondegenerate projective measurement; the post-measurement state is the eigenprojector associated with the outcome obtained.

Projective measurements are severely limited by $n\leq d$. A general POVM may have any number of outcomes, unrestricted by $d$. A consequence of this is that the POVM associated to a general measurement may span more of $L(\mathcal{H}_d)$. In fact, a POVM may contain as many as $d^2$ linearly independent effects, spanning the entirety of $L(\mathcal{H}_d)$. Such POVMs are called informationally complete (IC) because the outcome probabilities $p_i$ are in one-to-one correspondence with the quantum state $\rho$~\cite{Busch91,Pru77}. IC-POVMs with the minimal number of effects, $n=d^2$, are called minimal IC-POVMs (MICs). MICs and IC-POVMs exist and may be constructed in every finite dimension~\cite{DFS20}.

Another important class of measurements are the equiangular POVMs (EA-POVMs). For us, an EA-POVM is the rank-1 and unbiased POVM constructed from an equiangular tight frame~\cite{Wal18}. An EA-POVM is equiangular meaning that the Hilbert--Schmidt inner product between all pairs of distinct effects is equal. The effects of an EA-POVM $\{E_i \}_{i=1}^{n}$ satisfy
\begin{eqnarray}
\tr \Pi_i \Pi_j = \frac{1}{d} \frac{n-d}{n-1}~(i \neq j)\;
,~~\text{and}~~
e_i = \frac{d}{n}~(\forall i)\;,
\end{eqnarray}
where $d \leq n \leq d^2$. Note that EA-POVMs are not guaranteed to exist for a given $n \in [d,d^2]$~\cite{Fer14}.

EA-POVMs and IC-POVMs can only coincide when $n=d^2$, as this is the minimal number of outcomes to be IC and the maximal number of outcomes to be equiangular. An EA-POVM with $n=d^2$ is a special kind of MIC called a  
symmetric informationally complete POVM 
(SIC)~\cite{RBSC04,Scott06,AFZ15,SIC17}.
SICs are also distinguished by having the minimal number of effects among POVMs that can form a 2-design~\cite{RBSC04,Scott06}.

The remainder of the paper is organized as follows.
Sec.~\ref{sec:dist} exhibits the intuition and basic properties
of our disturbance measure $D$. We show that disturbance may be written as a sum of terms involving the POVM's effects, which we interpret in the following two sections. In Sec.~\ref{sec:reso}, we motivate and define a measurement's measurement strength $R$, a measure of the expected information gain from measurement based on a particular generalized entropy. In Sec.~\ref{sec:orth} we define a measurement's orthogonality $O$, a measure of the degree to which its effects differ from an orthonormal set. 
We show that for a POVM $\E$ with $n\;(d\leq n \leq d^2)$ effects,
its orthogonality is restricted, namely
\begin{eqnarray}
O(\E) \geq \frac{(n-d)^2}{n-1}\;,
\end{eqnarray}
and the lower bound can be achieved by EA-POVMs. Finally,
disturbance $D$ can be decomposed as follows,
\begin{eqnarray}
D = 2d(R+1) -d^2 - n + O\;.
\label{eq}
\end{eqnarray}
This relation is examined in Sec.~\ref{sec:dis}, where we also discuss the conceptual content of our work, especially as it relates to a hypothetical perfectly informative and nondisturbing classical measurement.
Finally, we conclude with some brief outlooks in Sec.~\ref{app:concl}.

\section{\label{sec:dist}Disturbance}
\begin{comment}Taking Hermitian operators as vectors 
(i.e., rewrite the state $\rho$ as $|\rho \yy $),
a measurement, more precisely its associated 
L$\ddot{\text{u}}$ders channel,
is a linear operator (i.e., rewrite POVM $\E$
as $[\E]$).
Then, we have $|\E(\rho)\yy = [\E]|\rho \yy$.
In order to study measurements intrinsic ability
to disturb input states, we will focus
on the operator $[\E]$ itself.
\end{comment}

We quantify a measurement's disturbance by the difference between an input state and the expected post-measurement state, assuming L\"uders rule updating. 

The expected post-measurement state for the input $\rho$ and POVM $\{E_i\}$ is given by the L\"uders channel Eq.~\eqref{post}. 
The Hermitian operators can be regarded as vectors as $L(\mathcal{H}_d)$
is also a linear space. Let $|\rho\yy$ denote the vector form of the operator $\rho$.
The L\"uders channel $\E$ is equivalent to a linear operator on $|\rho\yy$, namely, $|\E(\rho)\yy = [\E]|\rho\yy$. Here, 
$[\E]$ is the linear operator corresponding to $\E$.
The 2-norm of vectors $|\rho\yy$ is equivalent to the Hilbert-Schmidt 
norm of operators $\rho$: $\||\rho\yy \|_2^2 = \zz\rho | \rho\yy = \tr\rho^2$.
We now use the square of the Hilbert-Schmidt distance to quantify the difference between $\rho$ and $\E(\rho)$, which can be expressed by
\begin{equation}
    D(\rho,\E(\rho)) :=\||[\E]|\rho\yy -|\rho\yy\|^2_2 = \tr(\E(\rho)-\rho)^2\;.
\end{equation}
If $\I$ is the identity channel, $\I(\rho) = \rho$ and $[\I]|\rho\yy = |\rho\yy$. Based on the theory of operator norms, 
the Frobenius norm $\|[\E]-[\I]\|^2_F$ is an upper bound of $D(\rho,\E(\rho))$.
The difference between $[\E]$ and the identity $[\I]$ thus provides an operator characterization of the property we are after without initial state dependence.
We define $\E$'s intrinsic disturbance as
\begin{equation}
D(\E) := \|[\E] - [\I] \|^2_F \equiv \tr([\E]-[\I])^2\;.
\end{equation}

To examine $D(\E)$, we first find 
the explicit matrix form of the operator $[\E]$.
Let $\{X_i \}$, $i=1,\ldots,d^2$, be an orthonormal basis for $L(\mathcal{H}_d)$. Then $[\E]$ can be expressed as
\begin{equation}
([\E])_{ij} = \sum_{l=1}^n \tr (X_i \sqrt{E_l} X_j \sqrt{E_l})\;.
\end{equation}
In the basis $\{X_i \}$, we have
\begin{eqnarray}\label{disturbancecommutator}
([\E-\I])_{ij} &&= \sum_l \tr (X_i \sqrt{E_l} X_j \sqrt{E_l})-\delta_{ij}
\notag
\\
&&=\frac{1}{2} \sum_l \tr ([\sqrt{E_l},X_i][\sqrt{E_l},X_j])\;,
\end{eqnarray}
where $[A,B]=(AB-BA)$ denotes the commutator of $A$ and $B$.
The latter form of Eq.~\eqref{disturbancecommutator} shows that disturbance is strongly
related to non-commutativity.
If $D(\E) = 0$, we have that $([\E - \I])_{ii} = 0$ for all $i$.
From this, it follows that $\sqrt{E_l}$ commutes with all
Hermitian operators, implying that $\sqrt{E_l}$ is proportional to the identity.
The following result is readily concluded

\begin{pro}\label{thm1}
$D(\E) \geq 0$,
with equality iff all effects 
of $\E$ are proportional to 
the identity (i.e., $\E = \I$).
\end{pro}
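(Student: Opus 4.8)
The plan is to get non-negativity for free from the definition and then extract the equality condition from the diagonal of the disturbance matrix. First I would observe that $D(\E)=\|[\E]-[\I]\|^2_F$ is a squared Frobenius norm, so $D(\E)\ge 0$ is immediate, and $D(\E)=0$ holds if and only if the matrix $[\E]-[\I]$ vanishes entrywise. In particular, equality forces each diagonal entry $([\E-\I])_{ii}$ to be zero, so the whole argument reduces to understanding these diagonal terms.

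Next I would evaluate the diagonal using the commutator form in Eq.~\eqref{disturbancecommutator}, which gives $([\E-\I])_{ii}=\tfrac{1}{2}\sum_l \tr([\sqrt{E_l},X_i]^2)$. The key observation is that the commutator of two Hermitian operators is anti-Hermitian, so each term satisfies $\tr([\sqrt{E_l},X_i]^2)=-\|[\sqrt{E_l},X_i]\|^2_F\le 0$. A sum of non-positive numbers vanishes only when every summand vanishes; hence $([\E-\I])_{ii}=0$ for all $i$ forces $[\sqrt{E_l},X_i]=0$ for every $l$ and every $i$.

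To finish the forward direction, I would use that $\{X_i\}$ is a basis of $L(\mathcal{H}_d)$ and, by complex linearity, of all operators on $\mathcal{H}_d$; thus $\sqrt{E_l}$ commutes with every operator and must therefore be a scalar multiple of the identity, giving $E_l\propto I$. For the converse, if every $E_l\propto I$ then $\sqrt{E_l}\propto I$ commutes with each $X_i$, all commutators vanish, $[\E]-[\I]=0$, and $D(\E)=0$; together with the POVM constraint $\sum_l E_l=I$, direct substitution into Eq.~\eqref{post} yields $\E(\rho)=\rho$, i.e.\ $\E=\I$. The only nonroutine ingredient is the standard fact that an operator commuting with all Hermitian (equivalently, all) operators is proportional to the identity; everything else is the sign of the anti-Hermitian square and elementary bookkeeping, so I expect no real obstacle beyond stating these facts cleanly.
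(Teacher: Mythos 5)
Your proposal is correct and follows essentially the same route as the paper: nonnegativity from the squared Frobenius norm, vanishing diagonal entries written via the commutator form of Eq.~\eqref{disturbancecommutator} as sums of traces of squared anti-Hermitian operators, hence vanishing commutators, hence each $\sqrt{E_l}$ proportional to the identity. You merely spell out the sign argument and the converse direction more explicitly than the paper does.
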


Since $D(\cdot)$ is free of the choice of 
orthonormal basis of $L(\mathcal{H}_d)$,
it is convenient for calculations to use the basis $\{A_{\mu\nu}\}$, where
\begin{eqnarray}
A_{\mu \nu}  \equiv 
\left\{
\begin{array}{lll}
| \mu \rangle\langle \mu |,~\mu = \nu
\\
\frac{1}{\sqrt{2}} (| \mu \rangle
\langle \nu | + | \nu \rangle \langle \mu |),
~\mu < \nu
\\
\frac{i}{\sqrt{2}} (| \mu \rangle
\langle \nu | - | \nu \rangle \langle \mu |),
~\mu > \nu\;,
\end{array}
\right.
\end{eqnarray}
Explicit computation reveals the following expression for disturbance in terms of the effects of the POVM:
\begin{widetext}
\begin{eqnarray}
D(\E) &&= 
\sum_{l,m,a,b,c,d}\tr(\sqrt{E_l}A_{ab}\sqrt{E_l}A_{cd})
\tr(\sqrt{E_m}A_{cd}\sqrt{E_m}A_{ab})
-2\sum_{l,a,b} \tr(\sqrt{E_l}A_{ab}\sqrt{E_l}A_{ab}) + 
d^2
\notag
\\
&&=\sum_{l,m,a,b,c,d}(\sqrt{E_l})_{ab}(\sqrt{E_m})_{ba}
(\sqrt{E_l})_{cd}(\sqrt{E_m})_{dc}
-2\sum_{l,a,b}(\sqrt{E_l})_{aa}(\sqrt{E_l})_{bb}
+d^2
\notag
\\
&&=\sum_{l,m}\tr^2(\sqrt{E_l}\sqrt{E_m})
-2\sum_l \tr^2(\sqrt{E_l})
+d^2\;.
\label{D}
\end{eqnarray}
\end{widetext}
The terms $\sum_l \tr^2(\sqrt{E_l})$ and $\sum_{l,m}\tr^2(\sqrt{E_l}\sqrt{E_m})$ 
in Eq.~\eqref{D} separately appear in two other characteristic properties of measurements which we introduce and investigate in the next two sections. 

%Together, these will suggest a new conceptual perspective on disturbance itself.

\begin{comment}
The term $\sum_l \tr^2(\sqrt{E_l})$ is 
related to the 
average purity of the post-measurement state
which we will discuss in Sec.~\ref{sec:reso}.
However, the other one $\sum_{l,m}\tr^2(\sqrt{E_l}\sqrt{E_m})$
is not so clear and will be a key point in Sec.~\ref{sec:orth}.
\end{comment}

\section{\label{sec:reso}Information}
The term $\sum_l \tr^2(\sqrt{E_l})$ appearing in the 
disturbance expression Eq.~\eqref{D} is related to the capacity of a measurement with L\"uders updating rule to produce a particular kind of invariant information.

\begin{comment}Early ideas
of quantifying information gain
by expected ``entropy reduction" use the von Neumann entropy $H(\rho):=-\tr \rho \ln \rho$ ~\cite{BRW14,Gro71} 
\begin{eqnarray}
H(\rho) - \sum_i p_i H(\rho_i)\;,
\end{eqnarray}
where $\{p_i , \rho_i \}$ is the post-measurement
ensemble induced by the measurement $\E=\{E_i \}_{i=1}^{n}$. 
\end{comment}

Information gain is often conceived in terms of a reduction of entropy. The most common quantum entropy is the von Neumann entropy $S(\rho):=-\tr \rho \ln \rho$. Although a distinguished choice, von Neumann entropy is not the only measure available. We make use of the generalized 
entropies of a state $\rho$, given
by~\cite{PI91,HY06} %LLZ18
\begin{eqnarray}
S_{s}^{(\alpha)}(\rho) :=  
\frac{\tr^s(\rho^{\alpha})-1}{s(1-\alpha)}\;,
\end{eqnarray}
where $s$ is a parameter identifying
families of entropies, each member of which is labeled by $\alpha$.
The R\'enyi entropies are the limiting $s\to 0$ family and, within these, the von Neumann entropy is obtained by taking $\alpha \to 1$.

Depending on one's purpose, any of the generalized entropies may be suitable for quantifying uncertainty in a quantum system. Two notable examples are $S^{(2)}_1$ and $S^{(1/2)}_2$. The linear entropy $S^{(2)}_1$ is complementary to the Brukner--Zeilinger information \cite{BZ01,BZ99,BZ09} and was shown by Luo to be connected to the variance of an observable~\cite{Luo07}. In \cite{Luo06}, Luo derived another invariant information by replacing the variance with the Wigner--Yanase skew information \cite{WY63}. This information measure is complementary to $S^{(1/2)}_2$ and is given by
\begin{equation}
    I_{\rm WY}(\rho):=d-\tr^2\sqrt{\rho}\;.
\end{equation}

The expected gain in this quantity resulting from a measurement $\E$ on a state $\rho$ is given by 
\begin{equation}
    R(\rho,\E):=\sum_i p_iI_{\rm WY}(\rho_i)-I_{\rm WY}(\rho)\;.
\end{equation}
For more context on this information measure, see Appendix \ref{sec:appa}.

To quantify the intrinsic information producing capacity of the measurement alone, we consider the quantity that arises from inputting the state of zero information, the maximally mixed state $\rho=I/d$. Then we have
\begin{eqnarray}
R(\E) :=R(I/d,\E)
=d -\frac{1}{d} \sum_i\tr^2
\sqrt{E_i}\;,
\label{R2}
\end{eqnarray}
which features the term of interest, $\sum_l \tr^2(\sqrt{E_l})$.

The following is easily verified:

\begin{pro}\label{thm2}
For any POVM, $R(\E)$ has lower and upper bounds
\begin{eqnarray}
0 \leq R(\E) \leq d-1\;.
\end{eqnarray}
The lower bound is saturated iff
all effects are proportional to the identity,
and the upper bound is saturated iff all effects are rank-1.
\end{pro}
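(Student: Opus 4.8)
The plan is to reduce both inequalities to a single two-sided estimate on the quantity $\sum_i \tr^2\sqrt{E_i}$. Starting from Eq.~\eqref{R2}, $R(\E) = d - \frac{1}{d}\sum_i \tr^2\sqrt{E_i}$, so the assertion $0 \leq R(\E) \leq d-1$ is equivalent to
\begin{equation}
d \leq \sum_i \tr^2\sqrt{E_i} \leq d^2\;,
\end{equation}
with $R(\E)\leq d-1$ corresponding to the lower estimate and $R(\E)\geq 0$ to the upper one. I would establish the two bounds independently, each by a Cauchy--Schwarz argument on the eigenvalues of $\sqrt{E_i}$. Throughout, let $\mu^{(i)}_1,\dots,\mu^{(i)}_d \geq 0$ denote the eigenvalues of the effect $E_i$, so that $\tr\sqrt{E_i} = \sum_k \sqrt{\mu^{(i)}_k}$ and $\tr E_i = \sum_k \mu^{(i)}_k$, and repeatedly invoke the normalization $\sum_i E_i = I$, which gives $\sum_i \tr E_i = d$.

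For the upper bound on $R$ (equivalently $\sum_i \tr^2\sqrt{E_i}\geq d$), the key observation is that for nonnegative numbers the square of a sum dominates the sum of squares, since all cross terms are nonnegative: $\tr^2\sqrt{E_i} = \left(\sum_k \sqrt{\mu^{(i)}_k}\right)^2 \geq \sum_k \mu^{(i)}_k = \tr E_i$. Summing over $i$ and using $\sum_i \tr E_i = d$ yields the bound at once. Equality forces every cross term to vanish, so each $E_i$ has at most one nonzero eigenvalue; excluding trivial zero effects, this is exactly the rank-1 condition stated for saturation.

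For the lower bound on $R$ (equivalently $\sum_i \tr^2\sqrt{E_i}\leq d^2$), I would apply Cauchy--Schwarz in the opposite direction, pairing each $\sqrt{\mu^{(i)}_k}$ with a unit factor: $\tr^2\sqrt{E_i} = \left(\sum_k 1\cdot\sqrt{\mu^{(i)}_k}\right)^2 \leq d\sum_k \mu^{(i)}_k = d\,\tr E_i$, where the factor $d$ counts the eigenvalues. Summing and again using $\sum_i \tr E_i = d$ gives $\sum_i \tr^2\sqrt{E_i}\leq d^2$. Equality in Cauchy--Schwarz requires all $\sqrt{\mu^{(i)}_k}$ equal for each $i$, i.e.\ all eigenvalues of $E_i$ equal, which together with $\sum_i E_i = I$ means every effect is proportional to the identity, matching the stated saturation condition.

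The computations are elementary, so the only real care is needed in the equality bookkeeping. The two points to verify are that, for the upper bound, ``at most one nonzero eigenvalue'' collapses to ``rank-1'' once zero effects are set aside by convention, and that, for the lower bound, ``all eigenvalues of each $E_i$ equal'' combined with $\sum_i E_i = I$ genuinely produces effects of the form $c_i I$ with $\sum_i c_i = 1$ (the coefficients $c_i$ need not be equal across $i$). Confirming that these two conditions line up precisely with the proposition's claims is the main, if minor, obstacle.
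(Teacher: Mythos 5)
Your proof is correct and complete; the paper itself offers no argument for Proposition~2 (it is merely declared ``easily verified''), and your two eigenvalue estimates --- $\tr^2\sqrt{E_i}\geq\tr E_i$ with equality iff $E_i$ has at most one nonzero eigenvalue, and $\tr^2\sqrt{E_i}\leq d\,\tr E_i$ by Cauchy--Schwarz with equality iff all eigenvalues of $E_i$ coincide --- combined with $\sum_i\tr E_i=d$ are exactly the elementary verification the authors intend. Your equality bookkeeping is also sound (note only that ``all eigenvalues equal'' already forces $E_i=c_iI$ without invoking $\sum_iE_i=I$, and that zero effects are excluded by convention, as you say).
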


The closer effects are to being proportional to the identity, the less information of this kind the measurement produces. On the other hand, if the effects are close to being rank-1, the expected information gain is high. Following a similar line of reasoning as in Ref.~\cite{DJJ01}, we call $R(\E)$ the measurement strength of $\E$.

\section{\label{sec:orth}Orthogonality}
Now we turn to the other term in Eq.~\eqref{D}, $\sum_{l,m}\tr^2(\sqrt{E_l}\sqrt{E_m})$. When $l\neq m$, nonzero contributions occur when $\sqrt{E_l}$ and $\sqrt{E_m}$ are not Hilbert--Schmidt orthogonal. The effects of a projective measurement are pairwise orthogonal projectors, but for general measurements, some degree of overlap may occur. Thus, we find that this term of the disturbance is related to the degree to which the effect operators deviate from orthogonality. 

Consider the Gram matrix of $\{\sqrt{E_i}\}$,
\begin{eqnarray}
[S]_{ij} := \tr(\sqrt{E_i}\sqrt{E_j})\;.
\end{eqnarray}

The effects of a projective measurement with $n=d$ distinct outcomes are orthonormal projectors, so $S$ equals the $n\times n$ identity matrix $\1$. When $n>d$, the requirement that effects be positive semidefinite prevents $S$ from equaling $\1$ within quantum theory. 

To quantify this distinction, we introduce the orthogonality $O(\E)$ for a POVM $\E=\{E_i \}_{i=1}^{n}$
\begin{eqnarray}
O(\E) := \|S -\1 \|^2_F = \tr(S -\1)^2\;,
\end{eqnarray}
which \emph{decreases} when effects are closer to orthonormal. When $O(\E)=0$, the effects are perfectly orthonormal; the farther away they are from this limit, the larger $O$ becomes.
Elementary algebra gives 
\begin{eqnarray}
O(\E) = \sum_{l,m}\tr^2(\sqrt{E_l}\sqrt{E_m}) - 2d + n\;,
\label{O}
\end{eqnarray}
the first term of which coincides with Eq.~\eqref{D}.

If there are too many 
vectors to form an orthogonal basis,
the circumstance most similar to an orthonormal basis
should be an equiangular set, if $n$ is such that this is possible.
Indeed, the following theorem states that for a fixed $d\leq n \leq d^2$, EA-POVMs minimize $O$.

\begin{thm}\label{thm3}
For any unitarily invariant norm $\|\cdot \|$ and real parameter $\gamma$,
\begin{eqnarray}
\|S_{n \rm EA} - \gamma\1 \| \leq 
\|S_{n} - \gamma\1 \|\;,
\label{T3}
\end{eqnarray}
where $S_n$ is the $S$ matrix of a 
POVM with $n$ nonzero effects
and $S_{n \rm EA}$ the $S$ matrix of an EA-POVM with 
$n$ effects.
\end{thm}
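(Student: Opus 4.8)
The plan is to reduce the entire family of norm inequalities to a single majorization relation between the eigenvalue spectra of $S_n$ and $S_{n\rm EA}$, and then to invoke the general theory of unitarily invariant norms. Since $S_n - \gamma\1$ and $S_{n\rm EA} - \gamma\1$ are real symmetric, every unitarily invariant norm is a symmetric gauge function of their singular values, which are the sorted numbers $|\lambda_i - \gamma|$, where $\{\lambda_i\}$ is the spectrum of the corresponding $S$ matrix. By the Ky Fan dominance theorem, \eqref{T3} holds for \emph{all} unitarily invariant norms at once if and only if the singular values of $S_{n\rm EA} - \gamma\1$ are weakly majorized by those of $S_n - \gamma\1$. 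Thus the whole statement collapses to one weak-majorization claim, which I would aim to prove for every real $\gamma$ simultaneously.

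First I would compute the spectrum of $S_{n\rm EA}$. Writing the effects of an EA-POVM as $\sqrt{E_i} = \sqrt{d/n}\,\Pi_i$ with rank-1 $\Pi_i$, one finds $S_{n\rm EA} = \frac{d-1}{n-1}\1 + \frac{n-d}{n(n-1)}J$, with $J$ the all-ones matrix, whose eigenvalues are $1$ (once) and $\frac{d-1}{n-1}$ (multiplicity $n-1$). The key reduction is an elementary lemma: if $x \prec y$ in ordinary majorization, then $\phi(x) \prec_w \phi(y)$ for every convex $\phi$, since $x \mapsto \max_{|I|=k}\sum_{i\in I}\phi(x_i)$ is a maximum of convex functions, hence symmetric and convex, hence Schur-convex. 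Applying this with $\phi(t) = |t - \gamma|$, it suffices to prove the $\gamma$-independent statement $\lambda(S_{n\rm EA}) \prec \lambda(S_n)$, i.e. that the flat EA spectrum is majorized by the spectrum of every $n$-outcome POVM.

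To establish this majorization I would use two facts about $S_n$. The trace constraint $\tr S_n = \tr(\sum_i E_i) = d$ fixes the total, matching $\tr S_{n\rm EA} = d$, so only the partial sums remain. The single extra input is $\lambda_{\max}(S_n) \ge 1$: since $S_n$ and the frame operator $\Phi = \sum_i |\sqrt{E_i}\yy\zz\sqrt{E_i}|$ on $L(\mathcal{H}_d)$ share their nonzero spectrum, the Rayleigh quotient of $\Phi$ in the identity direction gives $\lambda_{\max}(S_n) \ge \frac{1}{d}\sum_i \tr^2\sqrt{E_i} \ge 1$, where the final step is exactly the upper bound $R(\E) \le d-1$ of Proposition \ref{thm2}. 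With $\lambda_1 \ge 1$, $\sum_i \lambda_i = d$, and the ordering $\lambda_1 \ge \cdots \ge \lambda_n \ge 0$, a short averaging argument (the largest $k-1$ of $\lambda_2,\dots,\lambda_n$ have mean at least the mean of all $n-1$ tail entries) yields $\sum_{i=1}^k \lambda_i \ge 1 + (k-1)\frac{d-1}{n-1}$ for every $k$, which is precisely the $k$-th partial sum of $\lambda(S_{n\rm EA})$; this is the required majorization.

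The part I expect to be most delicate is not any single computation but keeping the argument uniform: the conclusion must hold for all unitarily invariant norms and all $\gamma$ at once, which is exactly why passing through eigenvalue majorization (rather than estimating one fixed norm) is essential, and why the convexity lemma must be used in the weak-majorization form $\phi(x)\prec_w\phi(y)$ rather than merely as $\sum\phi(x_i)\le\sum\phi(y_i)$. The only genuinely POVM-specific ingredient is $\lambda_{\max}(S_n)\ge1$, so I would be careful to justify both that $S_n$ and $\Phi$ share their nonzero eigenvalues and that the identity is the correct test vector, since together these are what force the EA spectrum to be majorization-minimal, and hence norm-minimal.
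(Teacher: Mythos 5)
Your proposal is correct and follows the same structural route as the paper's proof in Appendix B: compute the flat spectrum $\left(1,\tfrac{d-1}{n-1},\dots,\tfrac{d-1}{n-1}\right)$ of $S_{n\rm EA}$, establish $\lambda_{\max}(S_n)\geq 1$, deduce $\lambda(S_{n\rm EA})\prec\lambda(S_n)$ by exactly the two-step averaging you sketch, convert this to weak majorization of the singular values of the $\gamma$-shifted matrices, and finish with Ky Fan dominance. The one substantive difference is the ingredient behind $\lambda_{\max}(S_n)\geq 1$. The paper tests the Gram matrix against the weight vector $(\sqrt{e_1},\dots,\sqrt{e_n})^T/\sqrt{d}$ and needs its Lemma~1, $\tr(\sqrt{A}\sqrt{B})\geq \tr(AB)/\sqrt{\tr A\,\tr B}$, to lower-bound the resulting quadratic form by $\sum_{l,m}\tr(E_lE_m)/d=1$. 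You instead use that $S_n$ and the frame operator share their nonzero spectrum and test against the identity, so the only input is $\sum_i\tr^2\sqrt{E_i}\geq d$, the elementary inequality behind the upper bound in Proposition~2 (which is independent of Theorem~1, so there is no circularity). Your route is lighter, bypassing the operator inequality of Lemma~1 entirely; what the paper's Lemma~1 buys in exchange is an explicit equality condition, which is the kind of information one would want when attacking the saturation question of Conjecture~1. Your treatment of the shift --- applying the convex function $t\mapsto|t-\gamma|$ together with Schur-convexity of the sorted partial sums --- is equivalent to the paper's step of first forming $\gamma l_n-\lambda$ and then invoking Bhatia's Theorem~II.3.3 on absolute values; both are standard and correct.
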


The proof of Theorem \ref{thm3} is given in Appendix \ref{app:proof}. Letting $\gamma=1$ in Theorem \ref{thm3}, it is an immediate consequence
that $O(\E_{n \rm EA}) \leq O(\E_n)$\;,
and that
\begin{eqnarray}
O(\E_{n\rm EA}) = \frac{(n-d)^2}{n-1}\;.
\label{oea}
\end{eqnarray}
This lower bound is general, even if there is no corresponding EA-POVM. 

We suspect the lower bound $\|S_{n \rm EA} - \gamma\1 \|$ can only be saturated by EA-POVMs, which can be rigorously expressed as follows: 

\begin{conj}\label{con1}
Fixing $n$, 
\begin{eqnarray}
\|S_{n\rm EA} - \gamma\1 \| = 
\|S_{n} - \gamma\1 \|\;,
\end{eqnarray}
for any real $\gamma$ only if $S_n$ is the $S$ matrix of an 
EA-POVM.
\end{conj}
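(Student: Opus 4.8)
The plan is to turn the equality hypothesis into a statement about the spectrum of $S_n$ and then read off the structure of $\E_n$ from the equality conditions of a sharp bound on the largest eigenvalue. I would first record the spectrum of the target. Writing $S_{n\rm EA}=\frac{d-1}{n-1}\1+\frac{n-d}{n(n-1)}J$, where $J$ is the all-ones matrix, shows that $S_{n\rm EA}$ has the simple eigenvalue $1$ (for $d<n$) with eigenvector proportional to $(1,\dots,1)$, and the eigenvalue $\frac{d-1}{n-1}$ with multiplicity $n-1$. Since a unitarily invariant norm is a function of singular values, I would extract from the hypothesis its Frobenius instance: $\tr(S_n-\gamma\1)^2=\tr(S_{n\rm EA}-\gamma\1)^2$ for all $\gamma$. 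Because $\tr S_n=\tr S_{n\rm EA}=d$, the terms linear and quadratic in $\gamma$ cancel automatically, so this collapses to $\tr S_n^2=\tr S_{n\rm EA}^2=1+\frac{(d-1)^2}{n-1}$, i.e.\ $O(\E_n)=\frac{(n-d)^2}{n-1}$.

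The heart of the argument is the auxiliary bound $\lambda_{\max}(S_n)\ge 1$, which encodes the POVM normalization and rules out the more uniform spectra that would otherwise undercut the EA value. Setting $t_i=\tr\sqrt{E_i}$ and $t=(t_1,\dots,t_n)$, I would test the Rayleigh quotient of $S_n$ at $t$: $\lambda_{\max}(S_n)\ge t^{\top}S_n t/\|t\|^2=\tr N^2/\|t\|^2$ with $N=\sum_i t_i\sqrt{E_i}$. Cauchy--Schwarz against the identity gives $\|t\|^2=\tr N\le\sqrt{d\,\tr N^2}$, hence $\tr N^2/\|t\|^2\ge\|t\|^2/d$, and finally $\|t\|^2=\sum_i(\tr\sqrt{E_i})^2\ge\sum_i\tr E_i=d$ because the cross terms in $(\sum_k\sqrt{\mu_k})^2$, with $\mu_k$ the eigenvalues of $E_i$, are nonnegative. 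Chaining yields $\lambda_{\max}(S_n)\ge1$, and the last step is an equality only when every $E_i$ has a single nonzero eigenvalue, i.e.\ when $\E_n$ is rank-1.

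I would then close the loop. A short constrained-optimization lemma shows that minimizing $\tr S^2$ over positive semidefinite $S$ with $\tr S=d$ and $\lambda_{\max}(S)\ge1$ has its unique minimizer at the spectrum $\{1,(\frac{d-1}{n-1})^{(n-1)}\}$; since $\tr S_n^2$ equals this minimum, the spectrum of $S_n$ is pinned to exactly this, so $\lambda_{\max}(S_n)=1$ and $S_n=\frac{d-1}{n-1}\1+\frac{n-d}{n-1}vv^{\top}$ with $v$ the unit eigenvector for eigenvalue $1$. Because $\lambda_{\max}(S_n)=1$, the bound above is tight, so $\E_n$ is rank-1; for rank-1 effects a direct computation gives $S_n t=t$, so $t$ is the eigenvector of eigenvalue $1$ and $v=t/\|t\|$ with $\|t\|^2=d$. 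Reading the diagonal, $e_i=S_{ii}=\frac{d-1}{n-1}+\frac{n-d}{n-1}(e_i/d)$, forces $e_i=d/n$ (unbiased); the off-diagonals are then all equal to $\frac{n-d}{n(n-1)}$, giving $\tr\Pi_i\Pi_j=\frac1d\frac{n-d}{n-1}$ (equiangular). A rank-1, unbiased, equiangular POVM is an EA-POVM, which is the claim.

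The main obstacle I anticipate is twofold. The sharp bound $\lambda_{\max}(S_n)\ge1$ with its exact equality analysis is the technical crux, as this is the only place the normalization $\sum_iE_i=\1$ (through $\sum_i\tr E_i=d$ and $t_i=\tr\sqrt{E_i}$) does any work, and it is what breaks the symmetry toward the EA spectrum. The more serious conceptual gap concerns the norm quantifier: the Frobenius instance (or any strictly convex Schatten norm) over all $\gamma$ suffices through the optimization lemma, but a weak norm such as the operator norm pins only the extreme eigenvalues $\lambda_{\max}$ and $\lambda_{\min}$ and leaves the bulk of the spectrum undetermined, so establishing the statement for that norm alone would require genuinely new input; this is plausibly where the problem resists a fully general proof. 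A minor but real point is that the optimization lemma is posed over the non-convex region $\lambda_{\max}\ge1$, so its uniqueness claim needs the boundary analysis confirming the optimum sits precisely at $\lambda_{\max}=1$.
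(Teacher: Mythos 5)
The paper offers no proof of this statement: it is posed explicitly as Conjecture~1 and left open, so there is no in-paper argument to compare yours against. Judged on its own terms, your argument is correct and in fact settles the conjecture in the instance that matters for the paper's orthogonality measure, namely the Frobenius norm (for which equality at a single $\gamma$ already pins $\tr S_n^2$, since $\tr S_n=d$ kills the $\gamma$-dependence). The ingredients echo the paper's proof of Theorem~1 in the appendix --- there too the crux is $\lambda_1(S_n)\geq 1$ --- but your route through $t_i=\tr\sqrt{E_i}$ is cleaner than the paper's, which tests the Rayleigh quotient at $v_i=\sqrt{e_i}/\sqrt{d}$ and invokes Lemma~1: in your version the equality condition collapses to the single scalar statement $\sum_i(\tr\sqrt{E_i})^2=d$, which immediately forces every (nonzero) effect to be rank-1, whereas the paper's Lemma~1 equality condition (each pair ``rank-1 or orthogonal'') would need extra case analysis. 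The remaining steps --- uniqueness of the spectrum minimizing $\tr S^2$ under $\tr S=d$ and $\lambda_{\max}\geq 1$, the identity $S_n t=t$ for rank-1 effects, and reading unbiasedness off the diagonal and equiangularity off the off-diagonal --- all check out; I verified the arithmetic giving $e_i=d/n$ and $\tr\Pi_i\Pi_j=\frac{1}{d}\frac{n-d}{n-1}$.

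Two caveats. First, the gap you flag yourself is real: read against Theorem~1, the conjecture quantifies over all unitarily invariant norms, and your argument genuinely uses the strict convexity of the Frobenius norm to pin the entire spectrum; for the operator or trace norm the hypothesis constrains only extreme or aggregate singular values, your optimization lemma loses its unique minimizer, and that reading of the conjecture remains open. Second, a small bookkeeping point: the conclusion requires that a rank-1, unbiased, equiangular POVM is an EA-POVM in the paper's sense, i.e.\ one built from an equiangular tight frame; this holds because $\sum_i E_i=I$ with $e_i=d/n$ is precisely the tight-frame condition for the underlying vectors, but it deserves an explicit sentence. With the Frobenius restriction stated up front, what you have is a correct resolution of the conjecture's most relevant case rather than of the conjecture in full generality.
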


When $n \notin [d,d^2]$, EA-POVMs certainty do not exist, and the exact lower bound of $O$ may not be so elegant and direct. We discuss this in the Appendix \ref{sec:appb}. Considering values $\gamma<1$ would enable comparisons not just to orthonormality, but to any degree of sub-normalized orthogonality, a case which may also be interesting. 

\section{\label{sec:dis}Discussion}

Combining the results we have, it is easy to show that 
disturbance $D$ satisfies the decomposition Eq.~\eqref{eq}, which we reproduce here:
\begin{eqnarray}\label{decomposition}
D = 2d(R+1) -d^2 - n + O\;.
\end{eqnarray}
This expression reveals that the disturbance of a measurement with L\"uders rule updating on a $d$ dimensional quantum system may be directly thought of 
in terms of its measurement strength, orthogonality and 
the number of measurement outcomes.

From Proposition \ref{thm1}, we know that $D=0$ if and only if all effects of the POVM are proportional to the identity. The probability for outcome $i$ for such a measurement simply equals $e_i$, the weight of the $i$th effect, irrespective of the quantum state of the system being measured. In other words, the nondisturbing limit of measurement in quantum theory is entirely informationless. 

When a measurement is not informationless, $D>0$. To gain intuition for the decomposition Eq.~\eqref{decomposition} in this case, let's see some explicit examples.

\emph{Example 1: von Neumann measurements.} For any nondegenerate von Neumann measurement on a $d$ dimensional system, the measurement strength is $R = d-1$ because each effect is rank-1,
and the orthogonality is $O=0$ because the effects are orthonormal. As $n=d$, the disturbance reduces to 
\begin{equation}
    D_{\rm vN}=d^2-d\;.
\end{equation}

\emph{Example 2: EA-POVMs.} As an EA-POVM is rank-1, the measurement strength is $R=d-1$ and we know the orthogonality from Eq.~(\ref{oea}). After algebraic simplification, the disturbance reduces to
\begin{eqnarray}\label{DnEA}
D_{n \rm EA} = \frac{n(d-1)^2}{n-1}\;.
\end{eqnarray}
When $n=d$, an EA-POVM is a von Neumann measurement. As $n$ grows, the disturbance of an EA-POVM, if one exists for a given $n$, decreases until $n=d^2$, corresponding to a SIC, where
\begin{eqnarray}
D_{\rm SIC} = \frac{d^2(d-1)}{d+1}\;.
\label{SICD}
\end{eqnarray}

\emph{Example 3: rank-1 MICs.} For a rank-1 MIC, as $R = d-1$ and $n=d^2$, disturbance is completely 
controlled by orthogonality: 
\begin{equation}
    D_{\rm r1MIC}=O_{\rm r1MIC}\;.
\end{equation}
As SICs have the minimal orthogonality in this class, they also have the minimal disturbance: $D_{\rm r1MIC}\geq D_{\rm SIC}$.

\emph{Example 4: 2-designs.} A POVM which forms a 2-design is rank-1 and has at least $d^2$ effects. When $n=d^2$, it forms a SIC. 
A corollary of Theorem 2 in Ref.~\cite{RBSC04} gives us the orthogonality
\begin{eqnarray}
O_{\rm 2d} = \frac{2d}{d+1} -2d + n\;,
\end{eqnarray}
from which it follows that $D_{\rm 2d}=D_{\rm SIC}$, that is, all 2-designs achieve the same disturbance. In the Appendix, we see this is the minimal possible disturbance among measurements with maximal measurement strength, a result similar to one in Ref.~\cite{Bar02}.

\emph{Example 5: Reflected SICs.} When a SIC exists in dimension $d$ with projectors $\Pi_i$, one may form an unbiased rank-$(d-1)$ MIC,
\begin{equation}
    E_i=\frac{1}{d(d-1)}(-\Pi_i+I)\;.
\end{equation}
This MIC might be called a reflected SIC because the effects are proportional to the points on the boundary of quantum state space one obtains by reflecting the projectors of a SIC through the maximally mixed state. When $d=2$, the reflected SIC is another SIC because it is rank-1, but this does not happen in higher dimensions. One may work out that the measurement strength is $R=1$, the orthogonality is
\begin{equation}
    O_{\rm rSIC}=\frac{2d^4-4d^3-d^2+4d}{d^2-1}\;,
\end{equation}
and the disturbance is
\begin{equation}
    D_{\rm rSIC}=\frac{d^2}{d^2-1}\;.
\end{equation}
The disturbance is quite low, achieved at the cost of a measurement strength of $1$, which falls shorter and shorter of the maximal $d-1$ as one considers larger Hilbert space dimensions.

\begin{comment}In view of Example 3, the combined term
\begin{eqnarray}
D-O = 2d(R+1) - d^2 -n
\end{eqnarray}
differs from zero for being different from a
rank-1 MIC, where $D-O=0$. A rank-1 MIC is the most informative 
and efficient class of POVMs as they have highest measurement strength and can be informationally complete with the minimum number of effects. We can interpret this deviation from rank-1 MICs, $D-O$, as a measure of \emph{informational completeness}. \TODO{need to mention this can be negative} Thus, the terms in the disturbance decomposition can be 
divided to two classes, informational completeness and orthogonality.
Namely, disturbance is informational completeness plus orthogonality.
\end{comment}

What do our results teach us about the relation between quantum and classical mechanics? To address this, we can think about classical versions for each of the quantities we considered quantum mechanically: disturbance, measurement strength, and orthogonality. How and why quantum mechanics satisfies different constraints may help us guess what kind of natural pressures make it the right theory to use in our universe.

A common way to conceive of classicality is as a commuting quantum subtheory where classical properties are revealed by the appropriate von Neumann measurement. However, our considerations of disturbance and Refs.~\cite{DFS20a} and \cite{DFS20} inspire us to follow another line of reasoning motivated by the way one may interpret informational completeness in the two theories. 

While we can learn nothing from a zero disturbance measurement in quantum mechanics, this is not true classically. From a classical perspective, measurement is understood in terms of revealing preexisting properties of a system, so when we speak of disturbance from a classical measurement, we are supposing the action of measurement will cause a change to these properties. One can thus imagine an informationally complete classical \emph{ideal} measurement which reveals everything about the system while touching it so lightly that there is no disturbance at all. An example would be a measurement which simply ``reads off'' the phase space coordinates of a system. In principle, nothing in classical physics prohibits the possibility of this kind of measurement. 

As it reveals the true, underlying properties of the system, the Shannon entropy after such a measurement would be zero --- the expected entropy reduction is maximal, or the measurement is maximally informative in this way. The production of total information from a classical ideal measurement may be considered roughly analogous to the quantum situation of rank-1 MICs which have maximal measurement strength and are informationally complete.

Furthermore, as the true, underlying properties are perfectly distinguishable, we can think of a classical ideal measurement as being perfectly orthogonal; overlap between measurement outcomes in a classical sense implies the measurement is more coarse-grained than is possible in principle. 

Thus, a classical ideal measurement simultaneously produces zero disturbance, maximal information, and is orthonormal. The first two characteristics can immediately be contrasted with Propositions \ref{thm1} and \ref{thm2}, where zero disturbance implies zero, i.e.\ \emph{minimal}, measurement strength. 

While a quantum projective measurement can be orthonormal, this case is not analogous to a classical ideal because a projective measurement falls hopelessly short of revealing ``everything about a system''; only informationally complete measurements can serve this purpose. Among these, the MICs do so with the fewest effects. Just as a classical ideal measurement does not have more outcomes than possible states of the world, a MIC has no more effects than are necessary to capture the full information contained in a quantum state. Pushing as close as possible to the classical extremes among valid MICs to maximize measurement strength while minimizing disturbance and orthogonality, we find that SICs are the quantum analog of a classical ideal measurement. 

SICs have previously been found to be optimal measurements for many purposes, both applied and conceptual~\cite{TFR+21,AFZ15,SIC17}, ranging from leading to the best average fidelity in quantum tomography~\cite{Scott06,Zhu11} to having applications in entanglement witnessing~\cite{SAZ+18}, quantum key distribution~\cite{Ren04}, and random number generation~\cite{AAS+16}. In a similar vein to our finding, SICs have also been seen to be ``most classical'' in that they furnish a probabilistic representation of quantum mechanics in which the Born rule takes a form as close as possible to the classical law of total probability~\cite{DFS20a,DFS20}.

\section{\label{app:concl} Conclusions}
We define the intrinsic disturbance of a measurement with L\"uders rule updating to be a state-independent upper bound of the square of the Hilbert-Schmidt distance between an input state and its expected post-measurement state. 
Disturbance can be decomposed into two parts: one part is related to
the total skew information gain (measurement strength) and the other captures how far the POVM effects differ from an orthonormal basis (orthogonality). 

Despite having maximal measurement strength and being perfectly orthonormal, a von Neumann measurement's disturbance is elevated above many measurements which are closer to being informationally complete. For low rank MICs, on the other hand, disturbance is largely determined by orthogonality, which is restricted. The lower bounds of orthogonality and the corresponding
measurements are found in many important cases.
We single out SICs as the 
analog of a classical ideal measurement as they minimize disturbance among MICs with maximal measurement strength. Our relation could reveal new physical insights into the meaning of measurement and potentially guide the construction of measurements when reducing disturbance is needed. 

Various other information-disturbance trade-offs have been proposed in the past based on other quantifiers, for example von Neumann entropy \cite{FP96,F98,FJ01,Mac06,Luo10,ZZY16}, fidelity \cite{FP96,F98,Ban01,Mac06}, and trace related norms \cite{FJ01,SKU16,BGD+20} (e.g., the Brukner-Zeilinger information and the Fisher information). Our relation is most notable for bringing
orthogonality to the foreground and for identifying a previously unknown role for the total skew information. Accordingly, these rarely used quantities deserve further study. 

%orthogonality and the information measure that measurement strength is based upon are not widely used and deserve further study in light of their appearance in the disturbance decomposition. In particular, orthogonality is strongly related to measurement apparatus'
%coherence (take the apparatus also as a quantum system), which is worth investigating.

\begin{acknowledgments}
We thank two anonymous referees whose feedback led to a significant rewrite and improvements throughout. We are also grateful to Christopher Fuchs and Blake Stacey for valuable comments on the manuscript.
This work is supported by the ORIC project 
from TEEP of Tsinghua University.
\end{acknowledgments}

\appendix

\section{Information of a state}
\label{sec:appa}

Brukner and Zeilinger \cite{BZ01,BZ99} proposed an alternative to von Neumann entropy for quantum systems, now called the Brukner--Zeilinger invariant information of a system $\rho$~\cite{BZ01,BZ99,BZ09}
\begin{equation}
    I_{\rm BZ}(\rho) =  \tr \rho^2 - \frac{1}{d}\;.
\end{equation}

Luo \cite{Luo07} found a connection between $I_{\rm BZ}(\rho)$ and the variance. Traditional variance defined for a pair of a quantum state $\rho$ and an observable $X$ is
\begin{equation}
    V(\rho, X) = \tr \rho X^2 - (\tr \rho X)^2\;.
\end{equation}
Let $\{X_i\}_{i=1}^{d^2}$ be an orthonormal basis of $L(\mathcal{H}_d)$. We can define
\begin{equation}
    V(\rho) :=\sum_i V(\rho, X_i) = d - \tr \rho^2\;.
    \label{Vrho}
\end{equation}
Note that the maximum of $V(\rho)$ is $V_{\max} = d-1/d$, so
we have
\begin{equation}
   I_{\rm BZ}(\rho) =   V_{\max} - V(\rho)\;.
   \label{IBZ}
\end{equation}
The quantity $V(\rho, X)$ is usually interpreted as the uncertainty of $X$ in the state $\rho$. Then, $V(\rho)$ is a quantity summarizing the total uncertainty of $\rho$ which is independent of observables.
Variance and entropy quantify uncertainty. The more uncertain a system is, the less information the system contains. So, $V_{\max} - V(\rho)$ could be a measure of information, which surprisingly coincides with $I_{\rm BZ}(\rho)$. 
Similarly, from an information measure, we can define corresponding entropy-like measure as $I_{\rm BZ \max} - I_{\rm BZ}(\rho)$. We find
\begin{equation}
    S^{(2)}_1 = I_{\rm BZ \max} - I_{\rm BZ}(\rho),
\end{equation}
which is an often-used general entropy called purity \cite{SKW+18,LS17}.

In Ref.~\cite{Luo06,Li11}, the Wigner-Yanase skew information \cite{WY63},
\begin{equation}
    I(\rho, X) = -\frac{1}{2}\tr [\sqrt{\rho},X]^2\;,
\end{equation}
replaces the variance $V(\rho, X)$, giving
\begin{equation}
    I_{\rm WY}(\rho) := \sum_i I(\rho, X_i) = d -(\tr \sqrt{\rho})^2\;.
\end{equation}
Because $\sqrt{U\rho U^{\dagger}} = U\sqrt{\rho} U^{\dagger}$ for any unitary $U$, $I_{\rm WY}(\rho)$ is also invariant under unitary transformations. Similarly, the entropy like value corresponding to $I_{\rm WY}$ is 
\begin{equation}
    S^{(1/2)}_2 = I_{\rm WY \max} - I_{\rm WY}(\rho)\;,
\end{equation}
which is the general entropy we use in Sec.~\ref{sec:reso}.

Some observables which commute with additive conserved quantities (e.g., energy, components of the linear and angular momenta,
electric charge), can be measured more easily than others (i.e., the Wigner-Araki-Yanase theorem) \cite{AY60}. 
Wigner and Yanase argued that there should be functions in the quantum theory to measure our knowledge with respect to these more accessible quantities. They proposed $I(\rho, X)$ to fulfill this task \cite{WY63}. Also, Wigner and Yanase gave five postulates on the information content of a quantum system, which can all be satisfied by $I(\rho, X)$. Similar to the idea of Eqs.~(\ref{Vrho}) and (\ref{IBZ}), $I_{\rm WY}(\rho)$, the total intrinsic uncertainty or information of a system, is obtained by averaging over any basis of the operator space. 

\section{Proof of Theorem 1}\label{app:proof}
First, we need a Lemma.
\begin{lem}
For two nonzero positive semidefinite 
operators $A$ and $B$,
\begin{eqnarray}
\tr(\sqrt{A}\sqrt{B}) \geq 
\frac{\tr AB}{\sqrt{\tr A \tr B}}\;,
\end{eqnarray}
with equality iff $A$ and $B$ are rank-1 or
orthogonal.
\end{lem}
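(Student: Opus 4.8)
The plan is to clear the denominator and prove the equivalent statement $\tr(AB)\le\sqrt{\tr A\,\tr B}\,\tr(\sqrt A\sqrt B)$, which is legitimate since $\tr A,\tr B>0$ for nonzero positive semidefinite operators. I would attack this through a single scalar fact about one operator at a time: for any positive semidefinite $A$ one has the operator inequality $A\preceq\sqrt{\tr A}\,\sqrt A$. This is immediate in the eigenbasis of $A$, where it reads $\alpha\le\sqrt{\tr A}\,\sqrt\alpha$ for each eigenvalue $\alpha$, i.e.\ $\sqrt\alpha\le\sqrt{\tr A}$, which holds because every eigenvalue of a positive semidefinite operator is bounded above by its trace.

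With this in hand the inequality follows by sandwiching twice. Using that $X\preceq Y$ implies $\tr(XN)\le\tr(YN)$ for any $N\succeq0$ (because $\tr((Y-X)N)=\tr(\sqrt N(Y-X)\sqrt N)\ge0$), I would first bound $\tr(AB)\le\sqrt{\tr A}\,\tr(\sqrt A\,B)$ by inserting $A\preceq\sqrt{\tr A}\sqrt A$ against $B\succeq0$, and then bound $\tr(\sqrt A\,B)\le\sqrt{\tr B}\,\tr(\sqrt A\sqrt B)$ by inserting $B\preceq\sqrt{\tr B}\sqrt B$ against $\sqrt A\succeq0$ and using cyclicity of the trace. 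Chaining the two gives exactly $\tr(AB)\le\sqrt{\tr A\,\tr B}\,\tr(\sqrt A\sqrt B)$.

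An equivalent and, for the equality analysis, more transparent route is to diagonalize both operators. Writing $A=\sum_i\alpha_i|u_i\rangle\langle u_i|$, $B=\sum_j\beta_j|v_j\rangle\langle v_j|$ and setting $c_{ij}=|\langle u_i|v_j\rangle|^2\ge0$, the four traces become $\tr(AB)=\sum_{ij}\alpha_i\beta_j c_{ij}$, $\tr(\sqrt A\sqrt B)=\sum_{ij}\sqrt{\alpha_i\beta_j}\,c_{ij}$, $\tr A=\sum_i\alpha_i$, $\tr B=\sum_j\beta_j$. The whole inequality then collapses to the termwise estimate $\alpha_i\beta_j=\sqrt{\alpha_i\beta_j}\,\sqrt{\alpha_i\beta_j}\le\sqrt{\alpha_i\beta_j}\,\sqrt{\tr A\,\tr B}$, valid because $\alpha_i\le\tr A$ and $\beta_j\le\tr B$; multiplying by $c_{ij}\ge0$ and summing recovers the claim.

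I expect the equality characterization to be the part requiring the most care. From the termwise form, equality forces, for every pair $(i,j)$ with $c_{ij}>0$ and $\alpha_i\beta_j>0$, the condition $\alpha_i\beta_j=\tr A\,\tr B$, which given $\alpha_i\le\tr A$ and $\beta_j\le\tr B$ can only hold when $\alpha_i=\tr A$ and $\beta_j=\tr B$ simultaneously. I would then split into two cases. If $\tr(AB)=0$, then $\sqrt A\sqrt B=0$ and both sides vanish; this is precisely the orthogonal case. If $\tr(AB)>0$, some contributing term exists, and $\alpha_i=\tr A=\sum_k\alpha_k$ forces all other eigenvalues of $A$ to vanish, so $A$ is rank-1, and likewise $B$ is rank-1. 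The delicate point is ruling out spurious mixed configurations, such as one operator rank-1 and the other of higher rank but overlapping only a single eigenvector; the termwise condition handles these automatically, since any genuinely contributing term pins an eigenvalue to the full trace and hence collapses the rank. Conversely, one checks directly that two rank-1 operators, or any orthogonal pair, saturate the bound, completing the iff.
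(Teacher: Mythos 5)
Your proof is correct and follows essentially the same route as the paper: the eigenbasis expansion with weights $c_{ij}=|\langle u_i|v_j\rangle|^2$ and the termwise estimate $\alpha_i\beta_j\le\sqrt{\alpha_i\beta_j}\,\sqrt{\tr A\,\tr B}$ is exactly the paper's computation (the paper normalizes the eigenvalues to sum to one and writes the same bound as $\sqrt{x}\ge x$ for $x\in[0,1]$), and your case split for equality matches its saturation condition. The operator-sandwich argument in your first paragraph is just a coordinate-free repackaging of the same scalar fact (eigenvalue $\le$ trace), and your equality analysis is, if anything, spelled out more carefully than the paper's.
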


\begin{proof}
In the orthonormal basis 
formed by $A$'s eigenvectors, 
$A$ has the matrix representation
$a D_A = a \text{diag}(\lambda^{A}_1,...,\lambda^{A}_d)$,
while $B$ can be represented by
$bUD_B U^{\dagger}$ with 
$D_B =  \text{diag}(\lambda^{B}_1,...,\lambda^{B}_d)$
and $U = (u_{ij})_{d\times d}$ a unitary matrix
, where $a = \tr A$ and $b = \tr B$.

So we can see
that 
$0 \leq \lambda_i^{\Gamma} \leq 1~(i=1,...,d;\Gamma = A,B)$, $\sum_i\lambda_i^{\Gamma} =1~(\Gamma = A,B)$
and thus
\begin{eqnarray}
\tr(\sqrt{A}\sqrt{B}) &&= 
\sqrt{ab} 
\tr(\sqrt{D_A}U\sqrt{D_B}U^{\dagger})
\nonumber
\\
&&=\sqrt{ab} \sum_{ij} \sqrt{\lambda_i^A} 
\sqrt{\lambda_j^B} u_{ij}^* u_{ij}
\nonumber
\\
&& \geq \sqrt{ab} \sum_{ij}
\lambda_i^A \lambda_j^B u_{ij}^* u_{ij}
\nonumber
\\
&& = \tr (AB) /\sqrt{\tr A \tr B}\;.
\end{eqnarray}
The inequality will be saturated if and only if 
$\tr(AB) = 0$ or
there is a pair $(i,j)$ such that $\lambda_i^A \lambda_j^B
=1$, which is equivalent to the statement in Lemma 1.
\end{proof}

In the following, we use the theory of 
majorization~\cite{Horn91,Mar11,Bha13}.
Let $\vec{x}$ and $\vec{y}$ be two real
vectors with $N$ elements and let $x^{\downarrow}$ and $y^{\downarrow}$ denote the vectors sorted in nonincreasing order.
Then we say $\vec{x}$ weakly majorizes
$\vec{y}$ from below, denoted $\vec{x} \succ_w \vec{y}$, if
\begin{eqnarray}
\sum_{i=1}^k x_i^{\downarrow} \geq 
\sum_{i=1}^k y_i^{\downarrow}~\text{for}~k=1,...,N\;.
\end{eqnarray}
If the last inequality is an equality, we say that
$\vec{x}$ majorizes $\vec{y}$, denoted 
$\vec{x} \succ \vec{y}$. 

Now, we can prove Theorem \ref{thm3} as follows
\begin{proof}
Let $\lambda(A)$ and $s(A)$ denote the 
vector of eigenvalues and singular values of  
the matrix $A$
in non-increasing order.
By the definition of an EA-POVM
, it can be directly worked out that
\begin{eqnarray}
\lambda(S_{n\rm EA}) = \left(1,\frac{d-1}{n-1},
...,\frac{d-1}{n-1}
\right)\;.
\end{eqnarray}

For any POVM $\E_{n}=\{E_i \}$ which has $n$ effects,
we can define a vector 
$\vec{v} = (\sqrt{e_1},...,\sqrt{e_n})^T/\sqrt{d}$,
where $e_i = \tr E_i$.
Suppose $\lambda(S_n) = (\lambda_1,...,\lambda_n)$,
then
\begin{eqnarray}
\lambda_1 &&\geq 
\frac{\vec{v}^T S_n \vec{v}}{\vec{v}^T\vec{v}}
\nonumber
\\
&&=\frac{1}{d}
\sum_{l,m}\tr(\sqrt{E_l}\sqrt{E_m})\sqrt{e_l}\sqrt{e_m}
\nonumber
\\
&&\geq \sum_{l,m}\tr(E_lE_m)/d = 1\;.
\end{eqnarray}
The last inequality is secured by Lemma 1.
Therefore,
\begin{eqnarray}
\lambda(S_n)\; &&\succ
\left(\lambda_1,\frac{d-\lambda_1}{n-1},
...,\frac{d-\lambda_1}{n-1}\right)
\nonumber
\\
&&\succ \left(1,\frac{d-1}{n-1},
...,\frac{d-1}{n-1}\right) = \lambda(S_{n \rm EA})\;.
\end{eqnarray}
Introduce the vector $l_n = (1,...,1)$ which has $n$
entries, it can be established for any real $\gamma$ that
\begin{eqnarray}
 \gamma l_n - \lambda(S_n)
\succ \gamma l_n - \lambda(S_{n\rm EA})\;.
\end{eqnarray}
From Theorem \uppercase\expandafter{\romannumeral2}.3.3 in Ref.~\cite{Bha13},
it holds for two real vectors $(x_1,...,x_N)$ and $(y_1,...,y_N) \in \mathbb{R}^N$ that,
\begin{eqnarray}
&&(x_1,...,x_N) \succ (y_1,...,y_N)
\nonumber\\
\Rightarrow &&(|x_1|,...,|x_N|) \succ_w (|y_1|,...,|y_N|)\;.
\end{eqnarray}
Therefore, we readily draw the conclusion,
\begin{eqnarray}
s(\gamma\1 - S_n)\succ_w
s(\gamma\1 - S_{n\rm EA})\;.
\end{eqnarray}
Since every unitarily invariant norm
is monotone with respect to 
the partial order induced by 
weak majorization of the vector of 
sigular values (see Corollary 3.5.9 in Ref.~\cite{Horn91}), the above formula 
is equivalent to our desired result
Eq.~(\ref{T3}). 
\end{proof}

\section{Orthogonality when $n\notin [d,d^2]$}
\label{sec:appb}
What are the minimal orthogonality POVMs when $n \notin [d,d^2]$, where an EA-POVM cannot exist? Answering this is left to future studies. Here we make some initial observations.

\subsection{Fewer effects}
Suppose $n<d$. One can always find 
POVMs with mutually orthogonal, although not normalized, effects.

Let $\{|\phi_i\y\}$ be an orthonormal basis of $\mathcal{H}_d$ and $d \div n = a \cdots b$ where 
``$\div$" stands for integer division and $b$ is the 
remainder. We can construct a POVM $\{E_i\}_{i=1}^n$ in the following way:
the matrix representations of the effects in the basis
$\{|\phi_i\y\}$ are diagonal and given by
\begin{eqnarray}
(E_i)_{lm} &&= 0,~l \neq m\;,
\nonumber\\
 (E_i)_{lm}&&= 0~\text{or}~1,~l = m\;,
\end{eqnarray}
where there are $a$ nonzero entries in $E_i~(i<n-1)$ and $a+b$ nonzero entries in $E_n$. For example, if $d=3$ and $n=2$, the first effect would have one nonzero element, $(E_1)_{11}=1$, and the second effect would have two, $(E_2)_{22}=1$ and $(E_2)_{33}=1$. $E_i$ does not overlap with $E_j$ if $i\neq j$. Thus, $\sum_i E_i = I$ and the $S$ matrix of this POVM is given by 
\begin{eqnarray}
S = \text{diag}(a,...,a,a+b)\;.
\label{Sn<d}
\end{eqnarray}
The measurement strength $R$, orthogonality $O$, and disturbance $D$ of
this POVM are 
\begin{eqnarray}
R=a(d+b)\frac{n-1}{d}\;,
\end{eqnarray}
\begin{eqnarray}
O=(a-1)(d+b)-n(a-1)+b^2\;,
\end{eqnarray}
\begin{eqnarray}
D= d(an-b-a) + b(an-a+d)\;.
\end{eqnarray}

Although for $n<d$, there are POVMs such as the one we just constructed whose operators are pairwise orthogonal, their orthogonality $O$ may not be zero as $O$ is actually a measure of \emph{nonorthonormality}. From Eq.~(\ref{Sn<d}), we conjecture the above POVM is 
the closet to being ``orthonormal''.

\begin{conj}
The above constructed POVM has 
minimal orthogonality $O$ among POVMs having $n(n<d)$ nonvanishing effects.
\end{conj}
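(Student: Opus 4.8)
The plan is to reduce the conjecture to a spectral optimization over Gram matrices and then attempt the majorization argument of Appendix~\ref{app:proof}. Because $\sum_i E_i=I$ forces $\tr S=\sum_i\tr E_i=d$ for every POVM with $n$ nonzero effects, the identity $O(\E)=\tr(S-\1)^2$ simplifies to $O(\E)=\tr S^2-2d+n$; minimizing $O$ is therefore equivalent to minimizing $\tr S^2=\sum_{i,j}\tr^2(\sqrt{E_i}\sqrt{E_j})$. I would split this as $\tr S^2=\sum_i e_i^2+\sum_{i\neq j}\tr^2(\sqrt{E_i}\sqrt{E_j})$, observing that the off-diagonal sum is nonnegative and vanishes precisely when the effects have mutually orthogonal supports. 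For the constructed POVM the effects are orthogonal projectors, so its matrix $S_0$ is diagonal, the off-diagonal part is zero, and $\tr S_0^2=\sum_i r_i^2$ for the integer rank profile $(a,\dots,a,a+b)$.

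Following Appendix~\ref{app:proof}, I would then constrain the spectrum $\lambda(S)$ of an arbitrary competitor using $S\succeq0$, $\sum_i\lambda_i(S)=d$, and the Rayleigh-quotient bound $\lambda_1(S)\ge1$ proved exactly as in Theorem~\ref{thm3} (test $S$ against $\vec v=(\sqrt{e_1},\dots,\sqrt{e_n})^{T}/\sqrt d$ and apply Lemma~1). The hoped-for mechanism is that these constraints place $\lambda(S_0)$ at the bottom of the majorization order, so that $\lambda(S)\succ\lambda(S_0)$ for every competitor; Schur-convexity of $x\mapsto\sum_i x_i^2$ would then give $\tr S^2\ge\tr S_0^2$, i.e.\ minimality of $O$.

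The hard part is establishing that $\lambda(S_0)$ is majorized by every admissible spectrum, and this is exactly where I expect the route — and the conjecture as stated — to break down. Among orthogonal (hence integer-rank) effects, $\sum_i r_i^2$ at fixed $\sum_i r_i=d$ is minimized by spreading the remainder one unit per effect rather than piling it onto a single effect, so whenever the remainder $b\ge2$ the profile $(a,\dots,a,a+b)$ \emph{majorizes} the more balanced profile with ranks in $\{a,a+1\}$ and yields a strictly larger $\tr S^2$; thus $S_0$ is not the bottom of the order and the template cannot certify it. A correct treatment would moreover have to weigh a second effect: relaxing orthogonality lets the weights approach the uniform value $e_i=d/n$, lowering the diagonal part $\sum_i e_i^2$ toward its Cauchy--Schwarz floor $d^2/n$ at the price of a positive off-diagonal penalty. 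The genuine minimizer is the solution of this trade-off between the integrality gap $\sum_i e_i^2-d^2/n$ of orthogonal constructions and the off-diagonal overlap of non-orthogonal ones — an optimization harder than the equiangular case, and the reason I expect the extremal POVM to need revising rather than the conjecture to admit a direct proof as written.
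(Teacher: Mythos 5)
There is no paper proof to compare against here: the statement is an explicit conjecture, which the paper leaves open (``Answering this is left to future studies''), supported only by the construction and the heuristic reading of its diagonal $S$ matrix. Your proposal correctly declines to prove it and instead identifies a genuine flaw, and your arithmetic is right. The paper's construction has rank profile $(a,\dots,a,a+b)$ with $d=an+b$, giving $O_0=(n-1)(a-1)^2+(a+b-1)^2$, while the balanced profile with $b$ projectors of rank $a+1$ and $n-b$ of rank $a$ (equally valid: mutually orthogonal projectors summing to $I$) gives $O_1=ba^2+(n-b)(a-1)^2$, and one checks $O_0-O_1=b(b-1)$, strictly positive whenever $b\ge 2$. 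Concretely, for $d=5$, $n=3$ the paper's POVM has $S=\mathrm{diag}(1,1,3)$ and $O=4$, whereas ranks $(2,2,1)$ give $S=\mathrm{diag}(2,2,1)$ and $O=2$. Since $n$ and $d$ are fixed, smaller $\tr S^2$ means smaller $O$, so the conjecture is false as stated; the candidate extremal POVM must at least be replaced by the balanced construction. The two constructions coincide exactly when $b\le 1$, which covers the paper's illustrative example $d=3$, $n=2$ — presumably how the oversight escaped notice.

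Your two supporting observations are also sound and are the right framing of what remains open. First, mutually orthogonal effects summing to the identity are forced to be projectors onto an orthogonal decomposition (restrict $\sum_i E_i=I$ to the support of each $E_i$), hence have integer traces; this is why the Cauchy--Schwarz floor $\sum_i e_i^2\ge d^2/n$ is unattainable by orthogonal constructions unless $n\mid d$, and why even the amended conjecture (balanced profile is optimal) is not settled: a non-orthogonal POVM with near-uniform weights could in principle trade the integrality gap against the nonnegative off-diagonal penalty $\sum_{i\ne j}\tr^2(\sqrt{E_i}\sqrt{E_j})$. Second, your diagnosis of why the Theorem~\ref{thm3} template cannot certify the construction is correct: that argument succeeds for $d\le n\le d^2$ because the flattest admissible spectrum is actually realized by an EA-POVM and sits at the bottom of the majorization order, whereas for $n<d$ the bottom-of-order spectrum $(d/n,\dots,d/n)$ is generically not realizable (it would require orthogonal effects of non-integer trace), so no single realizable spectrum is majorized by all competitors and Schur-convexity of $x\mapsto\sum_i(x_i-1)^2$ has nothing to bite on. In short: your attempt is not a proof, but it is a correct refutation of the conjecture for $d\bmod n\ge 2$ together with an accurate account of the residual optimization problem.
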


\subsection{More effects}
Suppose $n>d^2$ and let $\E_{n,r}=\{E_i\}_{i=1}^n$ be a POVM, where $r$ labels the dimension of the space spanned by the operators $\{\sqrt{E_i}\}$, that is,  $r=\text{dim}\left(\text{span}\left(\{\sqrt{E_i}\}\right)\right)$. Naturally, $r \leq d^2$. If $r \geq d$, the following inequality holds:
\begin{eqnarray}
O(\E_{n,r}) \geq O(\E_{r\rm EA}) + n - r\;,
\label{BO}
\end{eqnarray}
where $\E_{r\rm EA}$ is an EA-POVM with $r$ effects.

\emph{Proof.} 
Let $S_{n,r}$ be the $S$ matrix of $\E_{n,r}$. Following the same arguments in the proof of Theorem \ref{thm3}, we have
\begin{eqnarray}
\lambda(S_{n,r}) &&\succ
\left(\lambda_1,\frac{d-\lambda_1}{r-1},
...,\frac{d-\lambda_1}{r-1},0...,0\right)
\nonumber
\\
&&\succ \left(1,\frac{d-1}{r-1},
...,\frac{d-1}{r-1},0,...,0\right),
\end{eqnarray}
where $\lambda_1 \geq 1$ is the greatest eigenvalue of 
$S_{n,r}$ and there are $n-r$ zero elements in $\lambda(S_{n,r})$. 
Mirroring Theorem \ref{thm3}, it can be shown that 
\begin{eqnarray}
 \lambda(\1 - S_{n,r})
\succ \left(0,\frac{r-d}{r-1},
...,\frac{r-d}{r-1},1,...,1\right),
\end{eqnarray}
and
\begin{eqnarray}
 s(\1 - S_{n,r})
\succ_w \left(0,\frac{r-d}{r-1},
...,\frac{r-d}{r-1},1,...,1\right).
\label{B9}
\end{eqnarray}
As we choose the Hilbert-Schmidt norm to define $O$, Eq.~(\ref{BO}) can be directly deduced from the fact
\begin{eqnarray}
 \|\1 - &&S_{n,r} \| \geq
 \nonumber\\
 &&\left\|\text{diag}\left(0,\frac{r-d}{r-1},
...,\frac{r-d}{r-1},1,...,1\right) \right\|\;,
\end{eqnarray}
which follows from Eq.~(\ref{B9}).
$\qed$

If $r=d^2$ and $\E_{n,r}$ is rank-1, Eq.~(\ref{BO}) 
gives $O(\E_{n,r}) \geq O(\E_{\rm SIC})$ and $D(\E_{n,r}) \geq D(\E_{\rm SIC})$ where $\E_{\rm SIC}$ is a SIC. Example 4 in Sec.~\ref{sec:dis} shows that the latter inequality
can be saturated by POVMs which form a 2-design.

\begin{conj}
Among POVMs with $n(n>d^2)$ effects, those which form a 2-design have the minimal orthogonality $O$.
\end{conj}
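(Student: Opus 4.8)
The plan is to recast the statement as a spectral optimization for the Gram matrix $S$ and then feed it into the majorization machinery already built for Theorem~\ref{thm3}. Since $O(\E)=\tr(S^2)-2d+n$ with $n$ held fixed, minimizing $O$ over $n$-outcome POVMs is the same as minimizing $\tr(S^2)=\sum_i\lambda_i(S)^2$. First I would record the three constraints that every POVM forces on the spectrum of $S$: the trace constraint $\tr S=\sum_i e_i=d$; the rank constraint $r:=\text{rank}(S)=\text{dim}(\text{span}\{\sqrt{E_i}\})\le d^2$; and the top-eigenvalue bound $\lambda_1(S)\ge 1$, which follows from Lemma 1 by the identical Rayleigh-quotient argument used in the proof of Theorem~\ref{thm3} together with $\sum_{l,m}\tr(E_lE_m)=\tr I=d$.

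Next I would bound $\tr(S^2)$ at fixed $r$. For $r\ge d$ the majorization chain of Appendix~\ref{sec:appb} gives $\lambda(S)\succ(1,\tfrac{d-1}{r-1},\dots,\tfrac{d-1}{r-1},0,\dots,0)$, and Schur-convexity of $\sum_i\lambda_i^2$ then yields $\tr(S^2)\ge 1+(d-1)^2/(r-1)$; the degenerate possibility $r<d$ is covered by the crude Cauchy--Schwarz estimate $\tr(S^2)\ge d^2/r>d$. The quantity $1+(d-1)^2/(r-1)$ is strictly decreasing in $r$ on $[d,d^2]$ (a one-line derivative check shows $(r-d)(r+d-2)<(r-1)^2$ whenever $d\ge 2$), so it is minimized at the maximal admissible value $r=d^2$, where it equals $2d/(d+1)$. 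Combining both regimes, every $n$-outcome POVM satisfies $\tr(S^2)\ge 2d/(d+1)$, i.e.\ $O(\E)\ge 2d/(d+1)-2d+n$, which is exactly $O_{\rm 2d}$.

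To complete the minimality claim I would show that a $2$-design saturates this bound. For any rank-$1$ POVM $E_i=e_i|\psi_i\rangle\langle\psi_i|$ one has $S_{ij}=\sqrt{e_ie_j}\,|\langle\psi_i|\psi_j\rangle|^2$, so $\tr(S^2)=\sum_{i,j}e_ie_j|\langle\psi_i|\psi_j\rangle|^4$ is precisely the frame potential; the Welch--Sidelnikov bound (using $\sum_i e_i=d$) gives $\tr(S^2)\ge 2d/(d+1)$ with equality iff $\{\sqrt{e_i}|\psi_i\rangle\}$ is a $2$-design. Direct diagonalization of $S$ for such a design---for instance a SIC padded with zero-weight outcomes---confirms the extremal spectrum $(1,\tfrac{1}{d+1},\dots,\tfrac{1}{d+1},0,\dots,0)$ anticipated by the majorization bound and matches Example~4. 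Hence $2$-designs realize the lower bound and therefore have minimal orthogonality.

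The hard part is uniqueness---showing that \emph{only} $2$-designs attain $O=O_{\rm 2d}$, which is what turns the inequality into the characterization the conjecture asserts. This means tracking every equality condition through the argument simultaneously: saturation forces $r=d^2$, forces $\lambda(S)$ to equal the design spectrum above, and forces the Lemma 1 estimate behind $\lambda_1\ge 1$ to be tight, so by the equality clause of Lemma 1 every pair of effects must be rank-$1$ or mutually orthogonal. The delicate step is excluding ``mixed'' extremizers carrying higher-rank effects: one must use $n>d^2$ to argue that a $d^2$-dimensional operator space cannot host the requisite number of pairwise-orthogonal higher-rank effects, thereby forcing all effects to be rank-$1$, after which the frame-potential equality condition pins the POVM down as a $2$-design. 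Controlling these degenerate equality cases cleanly---rather than the lower bound itself, which follows from the tools above---is the principal obstacle, and is presumably why the statement is left as a conjecture.
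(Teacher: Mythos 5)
The paper does not actually prove this statement---it is left as a conjecture---so the relevant comparison is with the partial result of Appendix~\ref{sec:appb}, which establishes $O(\E_{n,r}) \ge O(\E_{r\rm EA}) + n - r$ for $r\ge d$ and then only draws a conclusion under the extra hypotheses that $\E_{n,r}$ is rank-1 with $r=d^2$. Your argument runs along exactly the same track (Lemma~1 forces $\lambda_1(S)\ge 1$, the trace and rank constraints give the majorization $\lambda(S)\succ(1,\tfrac{d-1}{r-1},\dots,\tfrac{d-1}{r-1},0,\dots,0)$, and Schur convexity of $\sum_i\lambda_i^2$ turns this into $\tr(S^2)\ge 1+(d-1)^2/(r-1)$), but you supply the two steps the paper omits: this bound is decreasing in $r$ and hence weakest at $r=d^2$, where it equals $2d/(d+1)$, and the leftover case $r<d$ is disposed of by $\tr(S^2)\ge d^2/r>d\ge 2d/(d+1)$. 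Since $\tr(S^2)$ is the weighted frame potential for rank-1 POVMs and the Welch bound (equivalently the paper's Example~4) shows weighted 2-designs attain $\tr(S^2)=2d/(d+1)$, you obtain the unconditional inequality $O(\E)\ge 2d/(d+1)-2d+n=O_{\rm 2d}$, with equality for 2-designs. I could not find a gap in these additions, so as literally stated---2-designs \emph{attain} the minimum of $O$---the conjecture appears to follow from the paper's own machinery; what you correctly leave open is the stronger uniqueness claim that \emph{only} 2-designs attain $O_{\rm 2d}$, which would require chasing the equality conditions of Lemma~1 and of both majorizations simultaneously and is plausibly why the authors stopped at a conjecture. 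Two small points: the monotonicity of $1+(d-1)^2/(r-1)$ in $r$ is immediate and needs no derivative computation, and the statement is vacuous for those $n>d^2$ at which no $n$-element weighted 2-design exists, so your result should be stated conditionally on existence.
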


% The \nocite command causes all entries in a bibliography to be printed out
% whether or not they are actually referenced in the text. This is appropriate
% for the sample file to show the different styles of references, but authors
% most likely will not want to use it.
%\nocite{*}

\bibliography{apssamp}% Produces the bibliography via BibTeX.

\end{document}